\documentclass[lettersize,journal]{IEEEtran}
\usepackage{amsmath,amsfonts}
\usepackage{setspace}
\usepackage{algorithmic}
\usepackage{algorithm}
\usepackage{array}
\usepackage{textcomp}
\usepackage{stfloats}
\usepackage{url}
\usepackage{verbatim}
\usepackage{graphicx}
\usepackage{cite}
\usepackage{caption}
\usepackage{subcaption}
\usepackage{amssymb}
\usepackage[colorlinks, linkcolor=blue, anchorcolor=blue, citecolor=green]{hyperref}
\usepackage{soul}       
\usepackage{xcolor}     
\usepackage{stfloats}
\newtheorem{proposition}{Proposition}
\begin{document}

\title{\LARGE Multi-AP Cooperative Beamforming for Cell-Free ISAC Networks:
Balancing Communication SINR and Sensing SCNR}

\author{Jijin~Guo,
        Lixin Li,~\IEEEmembership{Member,~IEEE,}
        Yufeng Zheng,~\IEEEmembership{Student~Member,~IEEE,}
        Dongwei Zhao, \\
        Wensheng Lin,~\IEEEmembership{Member,~IEEE,} 
        and~Zhu Han,~\IEEEmembership{Fellow,~IEEE}%
        	
        \thanks{Corresponding authors: Lixin Li, Wensheng Lin.

	This work was supported in part by National Natural Science Foundation of China under Grants 62571450 and 62101450, in part by Key Research and Development Program of Shaanxi under Grant 2025CY-YBXM-043, in part by Shanghai Academy of Spaceflight Technology under Grant SAST2025-037, in part by the Open Fund of Intelligent Control Laboratory, in part by the Open Fund of Key Laboratory of Radio Spectrum Testing Technology (The State Radio\_monitoring\_center Testing Center), Ministry of Industry and Information Technology, the National Science Foundation (NSF) under Grant ECCS-2302469; the Japan Science and Technology Agency (JST) Adopting Sustainable Partnerships for Innovative Research Ecosystem (ASPIRE) under Grant JPMJAP2326.}
	\thanks{Jijin~Guo, Lixin Li, Yufeng Zheng, Dongwei Zhao, and Wensheng Lin are with the
                School of Electronics and Information, Northwestern Polytechnical University,
                Xi’an, Shaanxi 710129, China (e-mail: guojijin@mail.nwpu.edu.cn; lilixin@nwpu.edu.cn; zhengyufeng@mail.nwpu.edu.cn; zhaodongwei@mail.nwpu.edu.cn; linwest@nwpu.edu.cn).} 
		\thanks{Zhu Han is with the Department of Electrical and Computer Engineering at the University of Houston, Houston, TX 77004 USA (e-mail: hanzhu22@gmail.com).}
}



\maketitle

\begin{abstract}
Cell-free integrated sensing and communication (ISAC) systems are facing the resource allocation challenges due to the deployment of access points (APs) and conflicting beamforming requirements between the communication and sensing functions. Unlike traditional ISAC architectures, the geographic distribution of APs introduces coordination complexity and resource-sharing conflicts that existing single-objective methods cannot adequately address. To address this challenge, we formulate an optimization problem for multi-AP cooperative beamforming that maximizes the sensing signal-to-clutter-plus-noise ratio (SCNR) under the communication rate constraints. The non-convex quadratically constrained quadratic program is transformed 
into a tractable convex semidefinite program via semidefinite relaxation, enabling efficient polynomial-time solutions and overcoming the local convergence limitations of traditional alternating optimization approaches. Simulation results demonstrate that the proposed approach achieves superior performance in both communication signal-to-interference-plus-noise ratio (SINR) and SCNR compared to existing schemes, confirming its effectiveness for balancing dual-functional objectives.
\end{abstract}

\begin{IEEEkeywords}
Cell-free networks, integrated sensing and communication, cooperative beamforming, semidefinite relaxation.
\end{IEEEkeywords}

\section{Introduction}
\IEEEPARstart{W}{ith} the rapid evolution of the sixth-generation (6G) networks, integrated sensing and communication (ISAC) has emerged as a key enabling technology that unifies communication and sensing functionalities within a single network architecture. By sharing hardware infrastructure and wireless spectrum resources, ISAC not only reduces network deployment and operational costs, but also facilitates joint resource utilization
between communication and sensing functions, thereby providing strong technical support for a wide range of emerging
applications such as security monitoring, healthcare, and intelligent transportation~\cite{b1, b2, b19}. 

Although existing studies have demonstrated the performance benefits of ISAC in single-base-station scenarios, research on practical multi-base-station cooperative networks remains relatively limited. Multi-base-station cooperation introduces complex challenges in resource allocation, interference management, beamforming design, and collaborative sensing. Addressing these challenges effectively is crucial for improving network performance and enhancing spectrum efficiency. Meanwhile, cell-free massive multiple-input multiple-output (CF-mMIMO), which eliminates traditional cellular boundaries and enables fully cooperative service among distributed access points (APs), has been recognized as an ideal architecture for ISAC deployment~\cite{b3}.

Several studies have focused on enhancing communication and sensing performance in ISAC systems. For instance, \cite{b4} utilizes signal-to-interference-plus-noise ratio (SINR) as a unified metric to address interference issues under both perfect and imperfect channel state information (CSI) conditions. \cite{b6} proposes a power allocation algorithm to harmonize communication and sensing, thereby boosting target detection capabilities. \cite{b7} and \cite{b20} investigate cooperative beamforming and optimal transport-based resource allocation for unmanned aerial vehicle (UAV) scenarios, respectively, while \cite{b8} presents a robust beamforming technique to tackle non-convexity challenges. \cite{b17} investigates joint sensing and bi-directional communication within a dynamic time-division duplexing (TDD) framework, focusing on target detection and radar cross section (RCS) estimation. Additionally, \cite{b18} addresses cooperative sensing by optimizing the Cramér-Rao Bound (CRB) to improve parameter estimation precision in the presence of synchronization errors. Despite these advancements, many methods still face significant challenges in addressing severe resource conflicts
and collaborative sensing difficulties within the distributed AP
networks, particularly in balancing the communication SINR and
sensing signal-to-clutter-plus-noise ratio (SCNR), where substantial performance gaps remain.

Motivated by the identified research gaps, this letter proposes a multi-AP cooperative beamforming framework for cell-free ISAC (CF-ISAC) networks to maximize sensing SCNR under communication rate constraints. We explicitly address three fundamental challenges: per-AP power constraints, inherent resource conflicts, and multi-static sensing geometry. The resulting non-convex problem is solved via semidefinite relaxation (SDR) with guaranteed global convergence. Crucially, by constructing a strictly rank-one solution to prove the tightness of the SDR, our approach achieves the global near-optimal balance between communication SINR and sensing SCNR, effectively eliminating the performance losses inherent in existing relaxation or heuristic methods. Simulation results demonstrate that the proposed scheme achieves near-optimal performance and robust spatial resilience. These gains scale effectively with antenna and power configurations, highlighting the viability of this approach for 6G networks.
\section{System Model}
As illustrated in Fig.~\ref{fig:1}, the CF-ISAC system consists of $M$ APs. Among them, $M_t$ APs are designated as transmit APs, denoted by the set $\mathcal{M}_t$ with $\left| \mathcal{M}_t \right| = M_t$, while the remaining $M_r$ APs are configured as receive APs, denoted by the set $\mathcal{M}_r$ with $\left| \mathcal{M}_r \right| = M_r$. Specifically, $\mathcal{M}_t$ is selected via user-centric clustering based on the strongest large-scale channel gains to users, while $\mathcal{M}_r$ consists of APs with the minimum path loss to the targets to ensure optimal echo reception. Each AP is equipped with $N$ antennas and is capable of simultaneously serving $K$ single-antenna communication users and sensing $Q$ targets. All APs employ uniform linear arrays (ULAs) with half-wavelength spacing for high-resolution beamforming. A CPU coordinates all APs via fronthaul links, ensuring perfect time and frequency synchronization to enable coherent cooperative communication and distributed sensing. 
Each AP utilizes a fully digital architecture, where every antenna is tied to an independent radio-frequency (RF) chain to support simultaneous multi-stream communication and sensing signaling.

The signal transmitted by the $m$-th AP for the $\ell$-th symbol is given by
\begin{figure}
    \centering
    \includegraphics[width=0.8\linewidth]{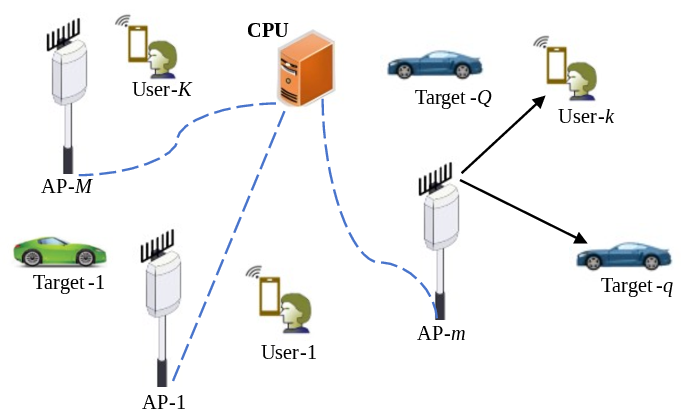}
    \caption{A CF-ISAC system model.}
    \label{fig:1}
\end{figure}
\begin{equation}\mathbf{x}_m[\ell]=\sum_{k\in\mathcal{K}}\mathbf{v}_{mk}x_k[\ell],\end{equation}
where $x_k[\ell] \in \mathbb{C}$ denotes the $\ell$-th symbol of the $k$-th stream, and $\mathbf{v}_{mk} \in \mathbb{C}^{N \times 1}$ is the beamforming vector applied by the $m$-th AP to this stream. We consider an orthogonal frequency-division multiplexing (OFDM)-based system where the cyclic prefix (CP) is sufficiently long to accommodate all propagation delays, including both user signals and target echoes. This allows for per-subcarrier processing, and our analysis thus focuses on a representative single subcarrier. The symbols are assumed to have unit average energy, $\mathbb{E}\left[ |x_k|^2 \right] = 1$. The beamforming vectors are subject to a total power constraint $P_m$ at each AP, expressed as $\mathbb{E}\left[ \| \mathbf{x}_m[\ell] \|^2 \right] = \sum_{k \in \mathcal{K}} \|\mathbf{v}_{mk}\|^2 \leq P_m$. Furthermore, by stacking the beamforming vectors from all APs for the $k$-th stream, we define the aggregate beamforming vector as $\mathbf{v}_k = \left[ \mathbf{v}_{1k}^\mathrm{T}, \ldots, \mathbf{v}_{M_t k}^\mathrm{T} \right]^\mathrm{T} \in \mathbb{C}^{M_t N}$.
\subsection{Communication Model}\label{AA}

The communication environment is modeled using a block flat-fading channel~\cite{b12}, where the channel coefficients are assumed to remain constant within each coherence block. Within each block, the channel vector $\mathbf{h}_{mk} \in \mathbb{C}^{N \times 1}$ characterizes the communication link from the $m$-th AP to the $k$-th user. Under TDD operation, channel reciprocity is assumed to be perfectly calibrated, allowing the APs to leverage uplink-acquired CSI for downlink beamforming. Specifically, the communication channel vector $\mathbf{h}_{mk}$ is mathematically defined as
\begin{equation}
\mathbf{h}_{mk} = \zeta_{\mathbf{h}_{mk}}^{1/2} \widetilde{\mathbf{h}}_{mk}, \quad \forall k, m,
\end{equation}
where $\zeta_{\mathbf{h}_{mk}}$ represents the large-scale path-loss factor, which remains constant over consecutive coherence intervals, and $\widetilde{\mathbf{h}}_{mk} \sim \mathcal{CN}(\mathbf{0}, \mathbf{I}_N)$ denotes the small-scale fading component, exhibiting typical Rayleigh fading characteristics. Furthermore, by stacking the channels between user $k$ and all APs, we construct the aggregate channel vector $\mathbf{h}_k \in \mathbb{C}^{M_t N \times 1}$. The received signal at user $k$ can be expressed as
\begin{align}
y_k^{(\mathrm{c})}[\ell] &= \sum_{m \in \mathcal{M}_t} \mathbf{h}_{mk}^\mathrm{H} \mathbf{x}_{m}[\ell] + n_k[\ell] \nonumber\\
&= {\sum_{m \in \mathcal{M}_t} \mathbf{h}_{mk}^\mathrm{H} \mathbf{v}_{mk} x_{k}[\ell]} \nonumber\\
&\quad + {\sum_{k' \in \mathcal{K} \setminus \{k\}} \sum_{m \in \mathcal{M}_t} \mathbf{h}_{mk}^\mathrm{H} \mathbf{v}_{mk'} x_{k'}[\ell]}
+ {n_k[\ell]},
\end{align}
where $n_k[\ell] \sim \mathcal{CN}(0, \sigma_k^2)$ represents the receiver noise at user $k$. The SINR for user $k$ is given by
\begin{equation}
\gamma_k^{(\mathrm{c})} = \frac{\left|\mathbf{h}_k^\mathrm{H} \mathbf{v}_k\right|^2}{\sum_{k' \in \mathcal{K} \setminus \{k\}} \left|\mathbf{h}_k^\mathrm{H} \mathbf{v}_{k'}\right|^2 + \sigma_k^2}.
\label{eq:4}
\end{equation}

The communication rate for user $k$ is then expressed as
\begin{equation}
R_k = \log_2(1 + \gamma_k^{(\mathrm{c})}).
\end{equation}
\subsection{Sensing Model}\label{AA}
In sensing task modeling, receiving APs use prior knowledge of transmitted signals to perform environmental sensing. Assuming the direct paths are removed via channel estimation, we adopt the point reflector model~\cite{b10}. The channel from the transmitting AP $m_t$ to the receiving AP $m_r$ via the $q$-th target reflection can be expressed as
\begin{equation}
\mathbf{G}_{m_t, q, m_r} = g_{m_t, q, m_r} \mathbf{a}(\phi_{m_r, q}) \mathbf{a}^\mathrm{H}(\phi_{m_t, q}),
\end{equation}
where $g_{m_t, q, m_r} \sim \mathcal{CN}(0, \zeta_{m_t, q, m_r}^2)$ is the combined sensing
channel gain, which includes the effects due to the path loss and RCS of the $q$-th target. $\phi_{m_t, q}$ and $\phi_{m_r, q}$ denote the angle of departure (AoD) and angle of arrival (AoA), respectively. The array steering vector is defined as $\mathbf{a}(\phi) \triangleq \frac{1}{\sqrt{N}} [1, e^{\mathrm{j}\pi \cos(\phi)}, \cdots, e^{\mathrm{j}\pi (N-1) \cos(\phi)}]^\mathrm{T}$, where $N$ is the number of antennas and $\phi \in \{\phi_{m_t,q}, \phi_{m_r,q}\}$. Adopting the Swerling-I model, the channel gain remains constant over $L$ symbols. In practice, targets may not always be present. We model the presence of the $q$-th target using a Bernoulli random variable $\alpha_q \in \{0,1\}$ with probability $\mathbb{P}(\alpha_q=1)=p_q$. Accordingly, the received signal at the receiving AP is expressed as
\begin{equation}
\mathbf{y}_{m_r}^{(\mathrm{s})}[\ell] = \sum_{q=1}^{Q} \alpha_q \sum_{m_t \in \mathcal{M}_t} \mathbf{G}_{m_t, q, m_r} \mathbf{x}_{m_t}[\ell] + \mathbf{w}_{m_r}[\ell],
\end{equation}
where $\mathbf{w}_{m_r}[\ell] = \mathbf{c}_{m_r}[\ell] + \mathbf{n}_{m_r}[\ell]$ denotes the aggregate disturbance consisting of Gaussian clutter $\mathbf{c}_{m_r}[\ell] \sim \mathcal{CN}(\mathbf{0}, \sigma_{\mathrm{c}, m_r}^2 \mathbf{I})$ and receiver noise $\mathbf{n}_{m_r}[\ell] \sim \mathcal{CN}(\mathbf{0}, \sigma_{\mathrm{n}, m_r}^2 \mathbf{I})$. The aggregate disturbance $\mathbf{w}_{m_r}[\ell]$ follows a complex Gaussian distribution $\mathcal{CN}(\mathbf{0}, \sigma_{\mathrm{w}, m_r}^2 \mathbf{I})$, with the total variance given by $\sigma_{\mathrm{w}, m_r}^2 = \sigma_{\mathrm{c}, m_r}^2 + \sigma_{\mathrm{n}, m_r}^2$~\cite{b16}. To express the radar received signal compactly, we introduce $\overline{\mathbf{V}}_m = [\mathbf{v}_{m1}, \ldots, \mathbf{v}_{mK}] \in \mathbb{C}^{N \times K}$, $\mathbf{x}_k= \begin{bmatrix} x_k[1],\ldots,x_k[L] \end{bmatrix}^\mathrm{T}$ and $\overline{\mathbf{X}} = [\mathbf{x}_1, \ldots, \mathbf{x}_K]^\mathrm{T} \in \mathbb{C}^{K \times L}$. Consequently, the signal transmitted by AP $m_t$ over $L$ symbols can be represented as $\mathbf{X}_{m_t} = \overline{\mathbf{V}}_{m_t} \overline{\mathbf{X}} \in \mathbb{C}^{N \times L}$. Therefore, the received signal due to the $L$ symbols at the $m_r$-th receiving AP can be expressed as

\begin{equation}
\mathbf{Y}_{m_r}^{(\mathrm{s})} = \sum_{q=1}^{Q} \alpha_q \sum_{m_t \in \mathcal{M}_t} \mathbf{G}_{m_t, q, m_r} \mathbf{X}_{m_t} + \mathbf{W}_{m_r},
\end{equation}
where $\mathbf{W}_{m_r} = \left[\mathbf{w}_{m_r}[1], \ldots, \mathbf{w}_{m_r}[L]\right]$ denotes the aggregate disturbance matrix.
\begin{algorithm}[t]
\caption{SDR-Based Multi-AP Cooperative Beamforming Algorithm}\label{alg:SDR}
\begin{algorithmic}
\STATE \textbf{INPUT:} $\mathcal{M}_t, \mathcal{M}_r, M, N, K, Q, P_m, R_k^{\min}$
\STATE \textbf{OUTPUT:} $\{\mathbf{v}_k\}$
\STATE \textbf{Step 1: Initialization}
\STATE \hspace{0.5cm} 1.1. Compute $\Gamma_k^{\min} = 2^{R_k^{\min}} - 1$
\STATE \hspace{0.5cm} 1.2. Define $\mathbf{h}_k$ by aggregating $\mathbf{h}_{mk}$ for all APs
\STATE \hspace{0.5cm} 1.3. Obtain $\gamma^{(\mathrm{s})}$ using equation \eqref{eq:q13}
\STATE \textbf{Step 2: Formulate the Optimization Problem}
\STATE \hspace{0.5cm} 2.1. Construct optimization problem \textbf{P2}
\STATE \hspace{0.5cm} 2.2. Solve the SDR of \textbf{P2} using CVX
\STATE \textbf{Step 3: Beamforming Vector Extraction}
\STATE \hspace{0.5cm} 3.1. Obtain $\mathbf{v}_k^{\prime\prime}$ using equation \eqref{eq:vk_double_prime}
\STATE \textbf{Step 4: Output Results}
\STATE \hspace{0.5cm} 4.1. \textbf{return} $\{\mathbf{v}_k\}$
\end{algorithmic}
\end{algorithm}

In detection tasks, the sensing SCNR is a critical physical-layer metric in radar systems, as it directly influences the ability to distinguish targets from background noise. The expected sensing SCNR is derived as $\gamma^{(\mathrm{s})}$ in (\ref{eq:q9}) (see the bottom of this page). It should be noted that the SCNR considered here represents a joint radar SCNR metric.

\begin{figure*}[b] 
\hrulefill
\begin{equation}
\begin{aligned}
\gamma^{(\mathrm{s})} = \frac{\mathbb{E}\left[\sum_{m_r\in\mathcal{M}_r}\left\|\sum_{q=1}^Q\alpha_q\sum_{m_t\in\mathcal{M}_t}\mathbf{G}_{m_t, q, m_r} \mathbf{X}_{m_t}\right\|_\mathrm{F}^2\right]}{\mathbb{E}\left[\sum_{m_r\in\mathcal{M}_r}\left\|\mathbf{W}_{m_r}\right\|_\mathrm{F}^2\right]}= \frac{\sum_{m_r\in\mathcal{M}_r}\sum_{m_t\in\mathcal{M}_t}\sum_{q=1}^Q p_q\zeta_{m_t, q, m_r}^2\left\|\mathbf{a}^\mathrm{H}(\phi_{m_t, q})\overline{\mathbf{V}}_{m_t}\right\|^2}{\sum_{m_r\in\mathcal{M}_r}\sigma_{\mathrm{w},m_r}^2}.\label{eq:q9}
\end{aligned}
\end{equation}
\end{figure*}

\section{Cooperative Beamforming for CF-ISAC}
In this section, we design the beamforming vectors for the transmitting APs in cell-free ISAC networks. Our objective is to maximize the sensing SCNR while satisfying the constraints on the communication rate threshold and the AP transmit power. This optimization problem can be formulated as
\begin{subequations}\label{eq:P1}
\begin{align}
(\mathrm{P}1):\quad & \max_{\{\mathbf{v}_{mk}\}} \quad \gamma^{(\mathrm{s})} \label{eq:P1a} \\
& \text{s.t.} \quad R_k \ge R_k^{\min}, \quad \forall k, \label{eq:P1b} \\
& \qquad \sum_k \|\mathbf{v}_{mk}\|^2 \le P_m, \quad \forall m, \label{eq:P1c}
\end{align}
\end{subequations}
where $R_k^{\min}$ is the minimum required communication rate for user $k$. The first constraint \eqref{eq:P1b} ensures that the communication rate of each user is above the minimum required rate, thus maintaining a certain level of communication quality. The second constraint \eqref{eq:P1c} guarantees that the total power used by the $m$-th AP does not exceed its maximum available power. 

Since the objective \eqref{eq:P1a} is non-convex, problem \textbf{P1} is a non-convex quadratically constrained quadratic program (QCQP) and cannot be solved directly. To address this, we reformulate \textbf{P1} into an equivalent semidefinite programming (SDP) form and apply the SDR technique~\cite{b11}. The resulting convex problem is then efficiently solved using standard solvers. The detailed implementation steps are as follows.

We need to define some new variables. First, we define the access point selection matrix $\mathbf{A}_m \in \mathbb{R}^{MN \times MN}$ as follows:
\begin{equation}
\left[\mathbf{A}_m\right]_{ij} = 
\begin{cases} 
1 & \text{if } (m-1)N+1 \leq i \leq mN, i = j, \\ 
0 & \text{otherwise}. 
\end{cases}
\end{equation}
Additionally, we reformulate the beamforming variables as $\mathbf{V}_k = \mathbf{v}_k \mathbf{v}_k^{\mathrm{H}} \in \mathbb{S}^+, \ \forall k \in \mathcal{K}$, with the non-convex rank-one constraint $\operatorname{rank}(\mathbf{V}_k) = 1$. Let $\overline{\mathbf{V}} = \sum_{k \in \mathcal{K}} \mathbf{V}_k$ denote the aggregate variable. To simplify the sensing SCNR,
we define $\mathbf{E} = \sum_{m_t \in \mathcal{M}_t}\sum_{m_r \in \mathcal{M}_r}\sum_{q=1}^Qp_q \zeta_{m_t, q, m_r}^2 \mathbf{A}_{m_t} \overline{\mathbf{E}}_q \mathbf{A}_{m_t}^\mathrm{H}$, where $\overline{\mathbf{E}}_q = \overline{\mathbf{a}}_q \overline{\mathbf{a}}_q^\mathrm{H}$ and $\overline{\mathbf{a}}_q = [\mathbf{a}(\phi_{1, q})^\mathrm{T}, \dots, \mathbf{a}(\phi_{M_t, q})^\mathrm{T}]^\mathrm{T}$. Based on these definitions, the SCNR numerator in \eqref{eq:q9} is derived as
\begin{equation}
\begin{aligned}
&\mathbb{E}\left[ \sum_{m_r \in \mathcal{M}_r} \left\| \sum_{q=1}^Q \alpha_q \sum_{m_t \in \mathcal{M}_t} \mathbf{G}_{m_t, q, m_r} \mathbf{X}_{m_t} \right\|_F^2 \right] \\
&= NL \sum_{q=1}^Q  \sum_{m_r \in \mathcal{M}_r}p_q \zeta_{m_t, q, m_r}^2  \operatorname{Tr}\left( \sum_{m_t \in \mathcal{M}_t} \mathbf{A}_{m_t} \overline{\mathbf{E}}_q \mathbf{A}_{m_t}^\mathrm{H} \overline{\mathbf{V}} \right) \\
&= NL \operatorname{Tr}\left( \mathbf{E} \overline{\mathbf{V}} \right),
\end{aligned}
\end{equation}
Then, the sensing SCNR can be defined as
\begin{equation}\label{eq:q13}
\gamma^{(\mathrm{s})} = \frac{\operatorname{Tr}(\mathbf{E} \overline{\mathbf{V}})}{\sum_{m_r \in \mathcal{M}_r} \sigma_{\mathrm{w}, m_r}^2}.
\end{equation}

According to the constraint conditions in (\ref{eq:P1b}), we define $\Gamma_k^{\min} = 2^{R_k^{\min}} - 1$, from which it follows that the communication SINR requirement is equivalently expressed as $\gamma_k^{(\mathrm{c})} \ge \Gamma_k^{\min}$. Let $\mathbf{H}_k = \mathbf{h}_k \mathbf{h}_k^{\mathrm{H}}$, and rewrite the SINR expression in (\ref{eq:4}) in terms of the new variable as
\begin{equation}
\gamma_k^{(\mathrm{c})} = 
\frac{\operatorname{Tr}\left(\mathbf{H}_k \mathbf{V}_k\right)}
{\sum_{k^\prime \in \mathcal{K} \setminus \{k\}} \operatorname{Tr}\left(\mathbf{H}_k \mathbf{V}_{k^\prime}\right) + \sigma_k^2}.
\end{equation}

Then, the above communication rate constraint in \eqref{eq:P1b} can be equivalently formulated as
\begin{equation}
\begin{aligned}
(1 + \Gamma_k^{\min}) \operatorname{Tr}\!\left(\mathbf{H}_k \mathbf{V}_k\right)
- \operatorname{Tr}\!\left(\mathbf{H}_k \overline{\mathbf{V}}\right)
\ge \sigma_k^{2}.\label{eq:q15}
\end{aligned}
\end{equation}
Based on the previous derivations, we now present the SDP formulation \textbf{P2} of the optimization problem.
\begin{subequations}\label{eq:P2}
\begin{align}
(\mathrm{P}2):\quad 
& \max_{\{\mathbf{V}_k\},\, \overline{\mathbf{V}}} 
\quad \operatorname{Tr}\!\left( \mathbf{E} \overline{\mathbf{V}} \right) 
\label{eq:P2a} \\[2pt]
& \text{s.t.} \quad \hspace*{\fill} (\ref{eq:q15}) \hspace*{\fill} 
\label{eq:P2b} \\
& \qquad \operatorname{Tr}\!\left( \mathbf{A}_m \overline{\mathbf{V}} \right) 
\le P_m, \quad \forall m \in \mathcal{M}_t, 
\label{eq:P2c} \\
& \qquad \mathbf{V}_k \in \mathbb{S}^{+}, 
\quad \forall k \in \mathcal{K}, 
\label{eq:P2d} \\
& \qquad \operatorname{rank}\!\left( \mathbf{V}_k \right) = 1, 
\quad \forall k \in \mathcal{K}. 
\label{eq:P2e}
\end{align}
\end{subequations}

By relaxing the non-convex rank constraints in \eqref{eq:P2e}, problem \textbf{P2} is transformed into a convex SDP. This relaxed problem can be efficiently solved using interior-point methods, which guarantee convergence to a globally optimal solution in polynomial time. However, to rigorously guarantee the global optimality of the original non-convex problem \textbf{P1}, we must establish the tightness condition of the SDR and the existence of a rank-one solution. 

\begin{proposition}[SDR Tightness]\label{prop:tightness}
The SDR of \textbf{P2} is tight. A globally optimal rank-one solution is guaranteed by the rank-one structure $\mathbf{H}_k = \mathbf{h}_k\mathbf{h}_k^{\mathrm{H}}, \forall k \in \mathcal{K}$.
\end{proposition}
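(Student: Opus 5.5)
Starting from an arbitrary optimal solution $\{\mathbf{V}_k^\star\}$ of the relaxed problem (P2 without \eqref{eq:P2e}), with $\overline{\mathbf{V}}^\star=\sum_k\mathbf{V}_k^\star$, I will construct a rank-one feasible point that attains the same objective value. This follows the standard construction used for SDRs whose SINR constraints involve rank-one channel matrices. For each $k$ I define
\begin{equation}
\mathbf{v}_k'' = \frac{\mathbf{V}_k^\star \mathbf{h}_k}{\sqrt{\mathbf{h}_k^{\mathrm{H}}\mathbf{V}_k^\star \mathbf{h}_k}},\qquad \mathbf{V}_k''=\mathbf{v}_k''\mathbf{v}_k''^{\mathrm{H}},
\end{equation}
and for the aggregate I keep $\overline{\mathbf{V}}''=\overline{\mathbf{V}}^\star$. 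The denominator is nonzero: constraint \eqref{eq:q15} together with $\sigma_k^2>0$ and $\mathbf{V}_k^\star\succeq \mathbf{0}$ forces $\mathbf{h}_k^{\mathrm{H}}\mathbf{V}_k^\star\mathbf{h}_k>0$.

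First I verify the communication part. By construction,
\begin{equation}
\operatorname{Tr}(\mathbf{H}_k\mathbf{V}_k'')=|\mathbf{h}_k^{\mathrm{H}}\mathbf{v}_k''|^2=\mathbf{h}_k^{\mathrm{H}}\mathbf{V}_k^\star\mathbf{h}_k=\operatorname{Tr}(\mathbf{H}_k\mathbf{V}_k^\star).
\end{equation}
This identity is exactly where the rank-one structure $\mathbf{H}_k=\mathbf{h}_k\mathbf{h}_k^{\mathrm{H}}$ is used. Next I show $\mathbf{V}_k^\star-\mathbf{V}_k''\succeq\mathbf{0}$. For any $\mathbf{z}$, Cauchy--Schwarz in the semi-inner product induced by $\mathbf{V}_k^\star$ gives $|\mathbf{z}^{\mathrm{H}}\mathbf{V}_k^\star\mathbf{h}_k|^2\le(\mathbf{z}^{\mathrm{H}}\mathbf{V}_k^\star\mathbf{z})(\mathbf{h}_k^{\mathrm{H}}\mathbf{V}_k^\star\mathbf{h}_k)$, and this is precisely $\mathbf{z}^{\mathrm{H}}\mathbf{V}_k''\mathbf{z}\le\mathbf{z}^{\mathrm{H}}\mathbf{V}_k^\star\mathbf{z}$. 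Hence $\sum_k\mathbf{V}_k''\preceq\overline{\mathbf{V}}^\star$.

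The subtle point is how to treat the aggregate. P2 defines $\overline{\mathbf{V}}=\sum_k\mathbf{V}_k$, and replacing each $\mathbf{V}_k^\star$ by $\mathbf{V}_k''$ may shrink the sum and hence lower both $\operatorname{Tr}(\mathbf{E}\overline{\mathbf{V}})$ and the interference term. I will fix this with an extra sensing-dedicated component $\mathbf{V}_0=\overline{\mathbf{V}}^\star-\sum_k\mathbf{V}_k''\succeq\mathbf{0}$, so that $\overline{\mathbf{V}}''=\overline{\mathbf{V}}^\star$ exactly. Realizing $\mathbf{V}_0$ requires dedicated sensing streams, obtained by factorizing $\mathbf{V}_0$ as a sum of rank-one terms. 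Alternatively, the slack can be redistributed along directions in the null space of the $\mathbf{h}_k$. With the aggregate unchanged, the objective $\operatorname{Tr}(\mathbf{E}\overline{\mathbf{V}}'')$ and the per-AP powers $\operatorname{Tr}(\mathbf{A}_m\overline{\mathbf{V}}'')$ in \eqref{eq:P2c} equal their values at the relaxed optimum, and \eqref{eq:q15} holds because both of its traces are unchanged. So the new point is feasible, rank-one in every $\mathbf{V}_k''$, and attains the relaxed optimal value, which upper-bounds P2; hence it is globally optimal.

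I expect justifying the $\mathbf{V}_0$ step to be the main obstacle. If the model insists on $\overline{\mathbf{V}}=\sum_k\mathbf{V}_k$ with only $K$ streams, I would instead read P2 with $\overline{\mathbf{V}}\succeq\sum_k\mathbf{V}_k$, which is how the aggregate variable appears to be intended. Otherwise I would argue that $\mathbf{V}_0$ can be absorbed into one of the communication streams without violating its own SINR; this needs separate care, and rank one cannot be guaranteed in general.
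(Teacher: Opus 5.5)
Your construction is the paper's own. You use the same $\mathbf{v}_k''$ as in \eqref{eq:vk_double_prime}, the same choice $\overline{\mathbf{V}}''=\overline{\mathbf{V}}^\star$, and reach the same conclusion that the objective, per-AP powers and interference are unchanged. Your version is more careful than the paper's, which never checks $\overline{\mathbf{V}}^\star-\sum_k\mathbf{V}_k''\succeq\mathbf{0}$. Your Cauchy--Schwarz step supplies this, and your signal-power identity and the positivity of $\mathbf{h}_k^{\mathrm{H}}\mathbf{V}_k^\star\mathbf{h}_k$ are correct. If (P2) is read with $\overline{\mathbf{V}}\succeq\sum_k\mathbf{V}_k$, i.e., with a dedicated sensing covariance, your argument is a complete proof.

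The obstacle you flag, however, is a genuine gap, and neither you nor the paper closes it. The signal model $\mathbf{x}_m[\ell]=\sum_k\mathbf{v}_{mk}x_k[\ell]$ has no sensing streams, so $\overline{\mathbf{V}}=\sum_k\mathbf{V}_k$ is forced. Feasibility survives, because interference and per-AP power can only decrease when the aggregate shrinks. The objective does not: the SCNR actually delivered by $\{\mathbf{v}_k''\}$ is proportional to $\operatorname{Tr}(\mathbf{E}\overline{\mathbf{V}}^\star)-\operatorname{Tr}(\mathbf{E}\mathbf{V}_0)$. With $\mathbf{S}=(\mathbf{V}_k^\star)^{1/2}$ and $\mathbf{g}=\mathbf{S}\mathbf{h}_k$, one gets $\mathbf{V}_k^\star-\mathbf{V}_k''=\mathbf{S}(\mathbf{I}-\mathbf{g}\mathbf{g}^{\mathrm{H}}/\|\mathbf{g}\|^2)\mathbf{S}$, which has rank $\operatorname{rank}(\mathbf{V}_k^\star)-1$. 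Hence $\mathbf{V}_0\neq\mathbf{0}$ exactly when some $\mathbf{V}_k^\star$ has rank at least two, which is the only case in which the proposition has content. The constructed point then falls short by $\operatorname{Tr}(\mathbf{E}\mathbf{V}_0)$, which is positive unless the range of $\mathbf{V}_0$ lies in the null space of $\mathbf{E}$. So you have shown that the relaxed optimum is attained by rank-one communication beamformers plus an extra sensing covariance, not that (P1) attains it.

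Your fallbacks do not repair this. Splitting $\mathbf{V}_0$ into rank-one pieces just reintroduces sensing streams, and adding it to $\mathbf{V}_k''$ destroys rank one. Replacing $\mathbf{v}_k''$ by $\mathbf{v}_k''+\mathbf{w}$ creates cross terms, so the aggregate is no longer $\overline{\mathbf{V}}^\star$ and neither the objective nor the SINRs are controlled. For the model as written you need a different tool, e.g., the Huang--Palomar rank bound for separable SDPs. The relaxation has $K+M_t$ linear constraints, so under Slater-type conditions it has an optimal solution with $\sum_k\operatorname{rank}^2(\mathbf{V}_k)\le K+M_t$. Since every feasible $\mathbf{V}_k\neq\mathbf{0}$, this forces all ranks to be one whenever $M_t\le 2$, which covers the simulated setup. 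For $M_t\ge 3$ this bound alone does not yield the unconditional claim.
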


Specifically, let $\{\mathbf{V}_k^\prime\}$ be an optimal solution to the relaxed SDP. We can reconstruct an exact rank-one optimal solution $\mathbf{V}_k^{\prime\prime} = \mathbf{v}_k^{\prime\prime}(\mathbf{v}_k^{\prime\prime})^{\mathrm{H}}$ by defining
\begin{equation}\label{eq:vk_double_prime}
\mathbf{v}_k^{\prime\prime} = (\mathbf{h}_k^{\mathrm{H}} \mathbf{V}_k^\prime \mathbf{h}_k )^{-\frac{1}{2}} \mathbf{V}_k^\prime \mathbf{h}_k.
\end{equation}
\begin{IEEEproof}
Let $\overline{\mathbf{V}}^\prime \!=\! \sum_{k} \mathbf{V}_k^\prime$. We construct a new solution $\{\mathbf{V}_k^{\prime\prime}\}$ using \eqref{eq:vk_double_prime}, and let $\overline{\mathbf{V}}^{\prime\prime} \!=\! \overline{\mathbf{V}}^\prime$. By definition, $\mathbf{V}_k^{\prime\prime} \succeq \mathbf{0}$ and $\mathrm{rank}(\mathbf{V}_k^{\prime\prime})\!=\!1$, which satisfy \eqref{eq:P2d} and \eqref{eq:P2e}. Since $\overline{\mathbf{V}}^{\prime\prime} \!=\! \overline{\mathbf{V}}^\prime$, both the objective value in \eqref{eq:P2a} with $\mathrm{Tr}(\mathbf{E}\overline{\mathbf{V}}^{\prime\prime}) \!=\! \mathrm{Tr}(\mathbf{E}\overline{\mathbf{V}}^\prime)$ and the power constraints in \eqref{eq:P2c} with $\mathrm{Tr}(\mathbf{A}_m\overline{\mathbf{V}}^{\prime\prime}) \!=\! \mathrm{Tr}(\mathbf{A}_m\overline{\mathbf{V}}^\prime) \!\le\! P_m$ remains perfectly unchanged. For the rate constraints in \eqref{eq:P2b}, the interference terms depend only on $\overline{\mathbf{V}}^{\prime\prime}$ and thus remain identical. It suffices to verify the signal power term. Based on \eqref{eq:vk_double_prime}, we obtain $\mathrm{Tr}(\mathbf{H}_k\mathbf{V}_k^{\prime\prime}) = | (\mathbf{h}_k^\mathrm{H}\mathbf{V}_k^\prime\mathbf{h}_k)^{-1/2} \mathbf{h}_k^\mathrm{H}\mathbf{V}_k^\prime\mathbf{h}_k |^2 \!=\! \mathbf{h}_k^\mathrm{H}\mathbf{V}_k^\prime\mathbf{h}_k = \mathrm{Tr}(\mathbf{H}_k\mathbf{V}_k^\prime)$. Thus, $\{\mathbf{V}_k^{\prime\prime}\}$ satisfies all constraints of \eqref{eq:P1} without objective degradation, proving the SDR is tight.
\end{IEEEproof}

In summary, Algorithm \ref{alg:SDR} outlines the procedure for our proposed SDR-based multi-AP cooperative beamforming (SDR-MCBF) algorithm.

In the centralized architecture, the CPU distributes the coordinated beamforming vectors to the $M_t$ transmitting APs. For each coherence interval, the CPU transmits $N$ complex coefficients to each AP per user stream. Consequently, the total fronthaul overhead is $M_tKN$ complex scalars. Although higher than that of distributed schemes, this overhead is necessary to achieve the coherent beamforming gain required to maximize the sensing SCNR.
\section{Simulation Results}

In this section, simulation results are presented to validate the effectiveness of our proposed methodology. We consider a $200 \times 200$ $\text{m}^2$ CF-ISAC scenario. Two 16-antenna  transmitting APs and two receiving APs are deployed at $(\pm 60, 60)$ m and $(\pm 60, -60)$~m, respectively. We distribute $K=5$ users uniformly along $y=20$ m with $x \in [-100, 100]$ m. Furthermore, $Q=2$ potential targets appear around $(0, 20)$ m with probability $p_q=0.5$, with their exact locations generated randomly. The communication links undergo block flat Rayleigh fading with path loss $\zeta_{\mathbf{h}_{mk}} = C_0 (d_{mk}/D_0)^{-\nu}$ ($C_0 = -30$ dB, $\nu = 3.0$). For the sensing channels, we adopt
the parameters $\zeta_{m_t, q, m_r}^2=0.1$ and $\sigma_{\mathrm{w}, m_r}^2=1$. With AP transmit power $P_m = 0$ dBW, user noise $\sigma_k^2=-80$ dBm, all simulation results are averaged over $1000$ Monte Carlo realizations to ensure statistical robustness.
\begin{figure}[htbp]
    \centering
    \includegraphics[width=0.5\textwidth, height=0.27\textwidth, trim=30 10 30 10, clip]{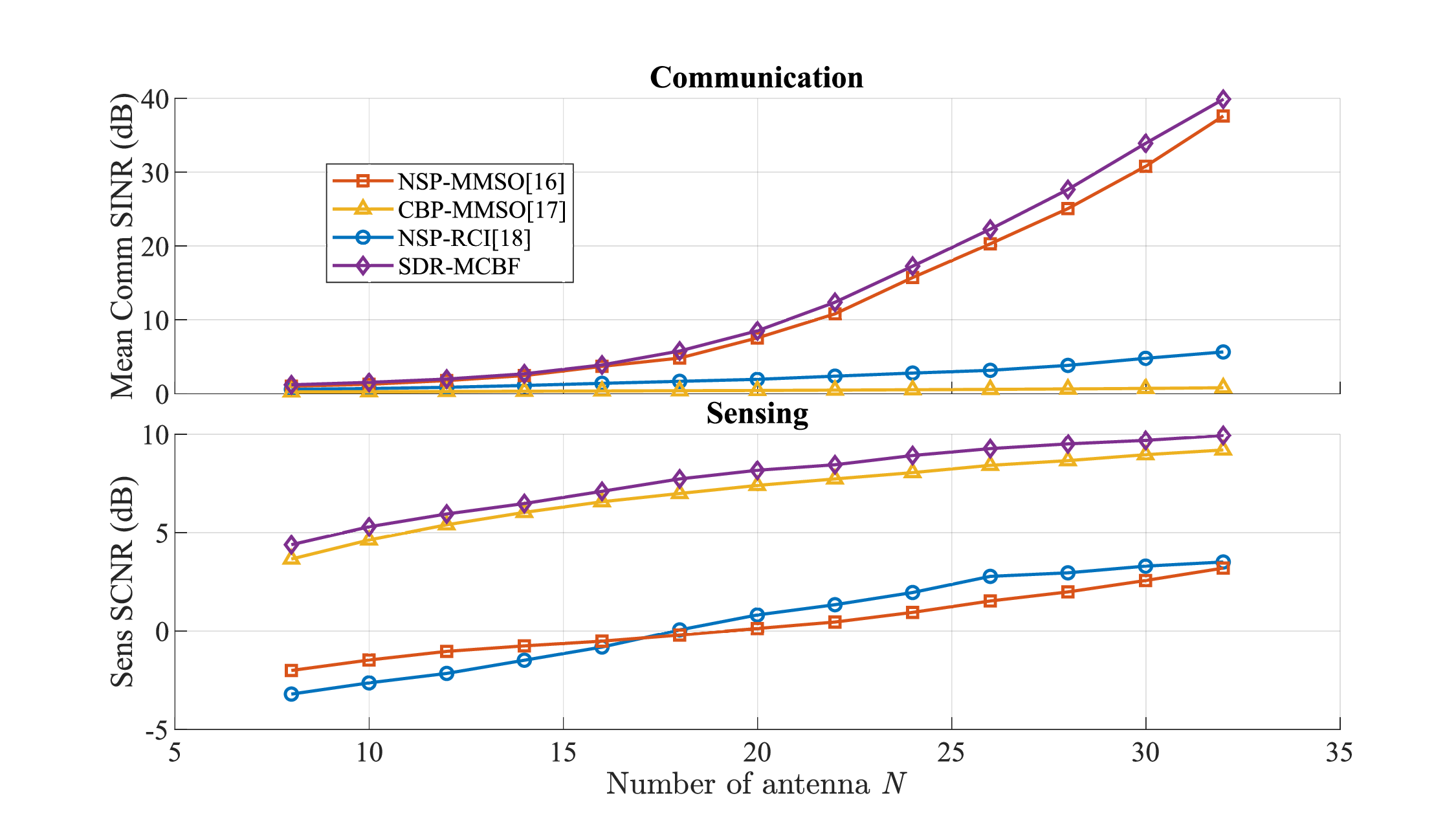}
    
    \caption{The communication and sensing performance of various schemes under different antenna numbers $N$.}
    \label{fig:2}
\end{figure}
\begin{figure}[htbp]
    \centering
    \includegraphics[width=0.60\textwidth, height=0.27\textwidth, trim=0 0 0 0, clip]{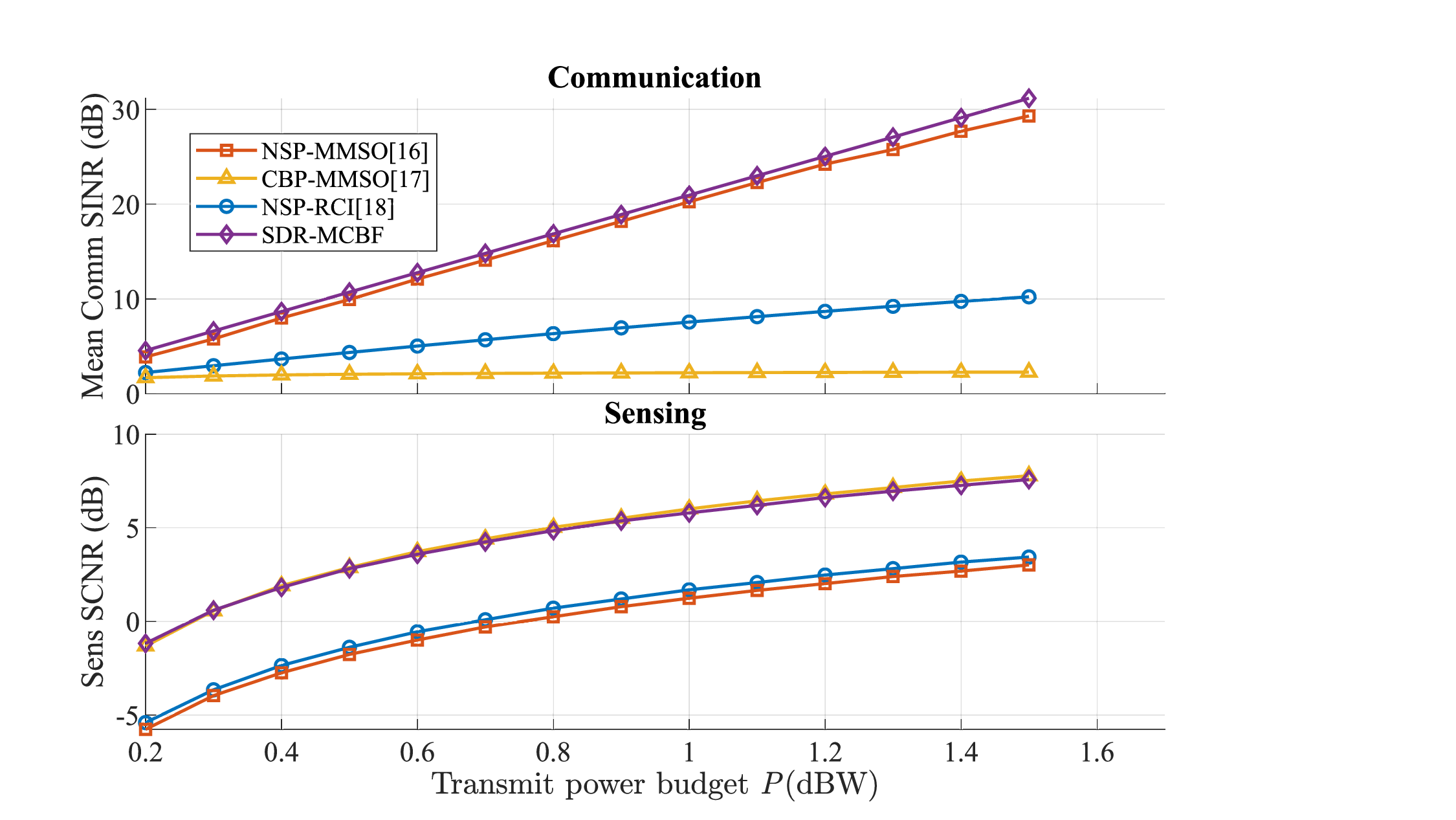}
    
    \caption{The communication and sensing performance of various schemes under different power $P$.}
    \label{fig:3}
\end{figure}
\begin{figure}[htbp]
    \centering
    \includegraphics[width=0.55\textwidth, height=0.29\textwidth, trim=30 10 30 10, clip]{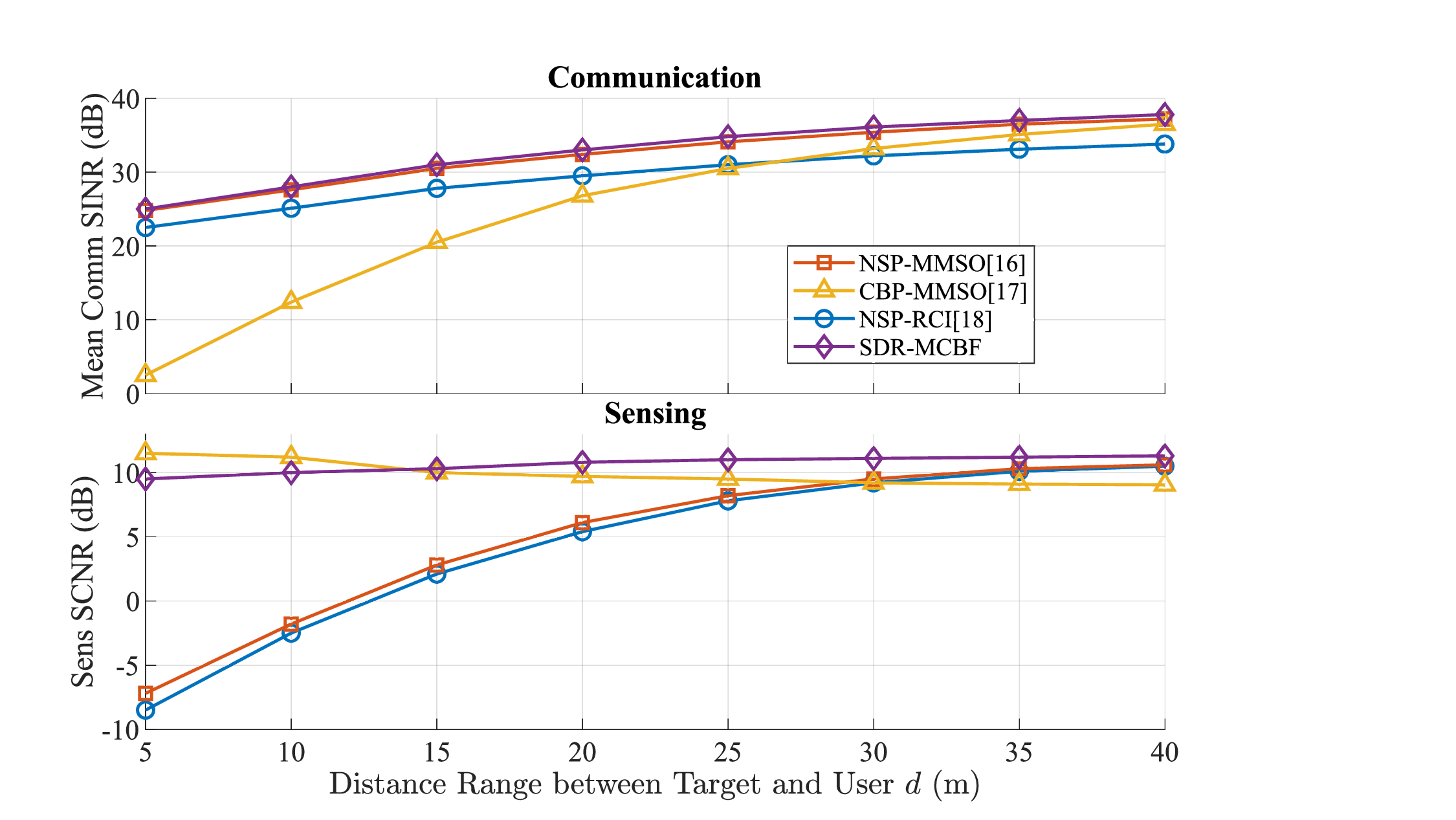}
    
    \caption{Communication and sensing performance versus the distance between sensing targets and communication users $d$.}
    \label{fig:4}
\end{figure}

To analyze the performance differences, we construct three benchmark schemes by linearly combining dedicated sensing and communication beamformers via an optimal weighting coefficient $\lambda \in [0, 1]$. The base methods include Null-Space Projection (NSP), Conjugate Beamforming Precoding (CBP), Regularized Channel Inversion (RCI), and Max-Min SINR Optimal (MMSO), denoted as $\mathbf{v}_{mk}^{\mathrm{NSP}}$, $\mathbf{v}_{mk}^{\mathrm{CBP}}$, $\mathbf{v}_{mk}^{\mathrm{RCI}}$, and $\mathbf{v}_{mk}^{\mathrm{MMSO}}$, respectively. Based on the linear combination framework $\mathbf{v}_{mk}^{\mathrm{Baseline}} = \lambda \mathbf{v}_{mk}^{\mathrm{Sens}} + (1-\lambda) \mathbf{v}_{mk}^{\mathrm{Comm}}$, we establish three benchmarks: NSP-MMSO~\cite{b13}, CBP-MMSO~\cite{b14}, and NSP-RCI~\cite{b15}, by substituting the corresponding sensing and communication beamformer pairs.

As illustrated in Fig.~\ref{fig:2}, increasing the number of antennas $N$ per AP drives a continuous upward trend in both communication SINR and sensing SCNR. Expanding the array scale significantly enhances beamforming capabilities and increases the number of spatial degrees of freedom. A similar overall performance improvement is observed when the transmit power is increased, as shown in Fig.~\ref{fig:3}. However, the sensing SCNR improves rapidly before saturating, while the communication SINR grows linearly. Across these varying physical parameters, the baseline schemes reveal inherent trade-offs in performance. CBP-MMSO achieves near-optimal sensing at the severe expense of communication. NSP-based methods (NSP-RCI and NSP-MMSO) yield limited sensing gains due to strict null-space constraints. In contrast, the proposed SDR-MCBF approach consistently achieves the near-optimal overall performance. By effectively coordinating resources and mitigating interference, SDR-MCBF excels in both power-limited and large-array scenarios. It demonstrates superior adaptability, scalability, and a balanced enhancement across both dual-functional objectives.

Fig.~\ref{fig:4} shows that under severe spatial conflict, with the target and user in close proximity, NSP-RCI and NSP-MMSO experience SCNR drops in sensing, while CBP-MMSO
experiences severe communication SINR degradation. In contrast, the proposed SDR-MCBF algorithm exhibits exceptional spatial resilience, achieving an optimal balance by maintaining robust communication and high sensing performance. Furthermore, SDR-MCBF consistently maintains its overall dual-functional superiority as the spatial separation increases.

\section{Conclusion}
This letter investigates the joint optimization problem of communication and sensing performance in cell-free integrated sensing and communication networks with distributed multi-AP cooperation. The SDR-based Multi-AP cooperative beamforming approach is proposed to maximize the sensing SCNR while satisfying communication rate constraints. Simulation results verify that the proposed approach achieves superior overall performance, demonstrating its effectiveness, adaptability, and scalability in complex network environments.

\vfill


\begin{thebibliography}{99}
\bibliographystyle{IEEEtran}

\bibitem{b1} W. Xu, Z. Yang, D. W. K. Ng, M. Levorato, Y. C. Eldar, and M. Debbah, ``Edge learning for B5G networks with distributed signal processing: Semantic communication, edge computing, and wireless sensing,'' \textit{IEEE J. Sel. Topics Signal Process.}, vol. 17, no. 1, pp. 9--39, Jan. 2023.

\bibitem{b2} W. Shi, W. Xu, X. You, C. Zhao, and K. Wei, ``Intelligent reflection enabling technologies for integrated and green Internet-of-Everything beyond 5G: Communication, sensing, and security,'' \textit{IEEE Wireless Commun.}, vol. 30, no. 2, pp. 147--154, Apr. 2023.

\bibitem{b19}
W. Liang, Y. Wang, J. Zhang, L. Li, and Z. Han,
``Predictive beamforming in integrated sensing and communication-enabled vehicular networks,''
\textit{IEEE Trans. Veh. Technol.}, vol. 74, no. 3, pp. 4539--4553, Mar. 2025.

\bibitem{b3} W. Mao, Y. Lu, C.-Y. Chi, B. Ai, Z. Zhong, and Z. Ding, ``Communication-sensing region for cell-free massive MIMO ISAC systems,'' \textit{IEEE Trans. Wireless Commun.}, vol. 23, no. 9, pp. 12396--12411, Sep. 2024.

\bibitem{b4} N. Zhao, Y. Wang, Z. Zhang, Q. Chang, and Y. Shen, ``Joint transmit and receive beamforming design for integrated sensing and communication,'' \textit{IEEE Commun. Lett.}, vol. 26, no. 3, pp. 662--666, Mar. 2022.

\bibitem{b6} Z. Behdad, Ö. T. Demir, K. W. Sung, E. Björnson, and C. Cavdar, ``Multi-static target detection and power allocation for integrated sensing and communication in cell-free massive MIMO,'' \textit{IEEE Trans. Wireless Commun.}, vol. 23, no. 9, pp. 11580--11596, Sep. 2024.

\bibitem{b7} Y. Zhang, H. Shan, Y. Zhou, Z. Shi, L. Sheng, and Y. Liu, ``Cooperative beamforming design for anti-UAV ISAC systems,'' \textit{IEEE Trans. Wireless Commun.}, vol. 24, no. 3, pp. 2249--2264, Mar. 2025.

\bibitem{b20}
Y. Zheng, L. Li, W. Lin, W. Liang, Q. Du, and Z. Han,
``Optimal transport framework for ISAC in low-altitude networks: Joint resource allocation for cooperative communication and non-cooperative localization,''
\textit{IEEE Trans. Commun.}, vol. 74, pp. 1984--2000, Mar. 2026.

\bibitem{b8} X. Yang, Z. Wei, J. Xu, Y. Fang, H. Wu, and Z. Feng, ``Coordinated transmit beamforming for networked ISAC with imperfect CSI and time synchronization,'' \textit{IEEE Trans. Wireless Commun.}, vol. 23, no. 12, pp. 18019--18035, Dec. 2024.

\bibitem{b17} A. Chowdhury, S. S. Thoota, and E. G. Larsson, ``Joint sensing and bi-directional communication with dynamic TDD enabled cell-free MIMO,'' \textit{IEEE Trans. Wireless Commun.}, vol. 25, pp. 2420--2434, 2026.

\bibitem{b18} H. Liu, Z. Wei, L. Sun, R. Xu, Y. Zhang, and Z. Feng, ``Cooperative sensing in cell-free massive MIMO ISAC systems: Performance optimization and signal processing,'' \textit{IEEE Trans. Wireless Commun.}, vol. 25, pp. 12531--12547, 2026.

\bibitem{b12} S. Zargari, D. Galappaththige, C. Tellambura, and H. V. Poor, ``A Riemannian manifold approach to constrained resource allocation in ISAC,'' \textit{IEEE Trans. Commun.}, vol. 73, no. 5, pp. 3655--3670, May 2025.

\bibitem{b10} Z. Behdad, Ö. T. Demir, K. W. Sung, E. Björnson, and C. Cavdar, ``Power allocation for joint communication and sensing in cell-free massive MIMO,'' in \textit{Proc. IEEE Global Commun. Conf. (GLOBECOM)}, Rio de Janeiro, Brazil, Dec. 2022, pp. 4081--4086.

\bibitem{b16} Z. Wei \textit{et al.}, ``Integrated sensing and communication channel modeling: A survey,'' \textit{IEEE Internet Things J.}, vol. 12, no. 12, pp. 18850--18864, Jun. 2025.

\bibitem{b11} M. Grant and S. Boyd, ``CVX: Matlab software for disciplined convex programming, version 2.1,'' Mar. 2014. [Online]. Available: http://cvxr.com/cvx

\bibitem{b13} U. Demirhan and A. Alkhateeb, ``Cell-free ISAC MIMO systems: Joint sensing and communication beamforming,'' \textit{IEEE Trans. Commun.}, vol. 73, no. 6, pp. 4454--4468, Jun. 2025.

\bibitem{b14} A. Zhou, J. Wu, E. G. Larsson, and P. Fan, ``Max-min optimal beamforming for cell-free massive MIMO,'' \textit{IEEE Commun. Lett.}, vol. 24, no. 10, pp. 2344--2348, Oct. 2020.

\bibitem{b15} J. Suh, J. Kang, K. Han, S. Hong, and G.-T. Gil, ``Null space projection-based design of multibeam for joint communication and sensing systems,'' \textit{IEEE Commun. Lett.}, vol. 27, no. 8, pp. 2162--2166, Aug. 2023.

\end{thebibliography}
\end{document}